\newtheorem{properties}{Properties}
\newcommand{\commentout}[1]{}
\newcounter{algline}
\newtheorem{definition}{Definition}[section]
\newtheorem{theorem}{Theorem}[section]
\newtheorem{lemma}[theorem]{Lemma}
\newtheorem{observation}[theorem]{Observation}
\begin{document}

\title{The Family Holiday Gathering Problem\\ or\\ Fair and Periodic Scheduling of Independent Sets 
}
%
\author{
Amihood Amir\thanks{Bar-Ilan University and Johns Hopkins University,
  Email:amir@cs.biu.ac.il. Partly
  supported by  NSF grant CCR-09-04581 and ISF grant 347/09.}\and
Oren Kapah\thanks{Bar-Ilan University,Email:orenkapah.ac@gmail.com}\and
Tsvi Kopelowitz \thanks{University of Michigan, Email: kopelot@gmail.com}\and
Moni Naor\thanks{Weizmann Institute of Science, Email: moni.naor@weizmann.ac.il. Incumbent of the Judith
        Kleeman Professorial Chair. Research supported in part by  grants from the
        Israel Science Foundation and from the
I-CORE Program of the Planning and Budgeting Committee and the Israel Science Foundation, BSF and IMOS.}\and
Ely Porat\thanks{Bar-Ilan University Email: porat@cs.biu.ac.il} }

\date{}

\maketitle

%
%
\thispagestyle{empty}
\setcounter{page}{0}

\begin{abstract}

We introduce and examine the {\em Holiday Gathering Problem} which
models the difficulty that couples have when trying to decide with which parents should they spend the holiday.  Our goal is to
schedule the family gatherings so that the parents that will be {\em happy}, i.e.\ all
their children will be home {\em simultaneously} for the holiday festivities, while minimizing the number of consecutive holidays in which parents are not happy.

The holiday gathering problem is closely related to several classical problems in computer science, such as the {\em dining philosophers problem} on a general graph and periodic scheduling,and has applications in scheduling of transmissions made by cellular radios. We also show interesting connections between periodic scheduling, coloring, and universal prefix free encodings.

The combinatorial definition of the Holiday Gathering Problem is: given a graph $G$, find an infinite sequence of independent-sets of $G$. The objective function is to minimize, for every node $v$, the maximal gap between two appearances of $v$.
In good solutions this gap depends on local properties of the node (i.e., its degree) and the the solution should be periodic, i.e.\ a node appears every fixed number of periods. We show a coloring-based construction where the period of each node colored with the $c$ is at most $2^{1+\log^*c}\cdot\prod_{i=0}^{\log^*c} \log^{(i)}c$ (where
$\log^{(i)}$ means iterating the $\log$ function $i$ times). This is achieved via a connection with {\it prefix-free encodings}. We prove that this is the best possible for coloring-based solutions. We also show a construction with period at most $2d$ for a node of degree $d$.


\end{abstract}

\setcounter{page}{0}
\newpage

\section{Introduction}\label{s:int}


In every culture family holiday gatherings play an
important role (see e.g.~\cite{nrg:09,tnn:10}), but these gatherings are also stressful.
In this paper we consider one of the anxiety-causing problems before
the holidays - where to go for the holiday dinner? Parents, whose
children are in a monogamous relationship, would obviously (?!) like
to have {\em all} their children at home for the holiday meal
(i.e.\ there is a special pleasure gained by the festive experience of
hosting all the children simultaneously and intuitively the goal is to
have this event occur as frequently as possible). We say that such
parents wish to be \emph{happy} during the holiday\footnote{Another possible goal would be to assure that no parents
are left alone for the holiday (i.e.\ there is a special depression
suffered by not hosting any of the children and intuitively the goal
is to have this event occur as infrequently as possible), which we discuss in the Appendix.}. However, the
conflict is that the in-laws would also be happy if all their children
come to them.

The astute reader realizes by now that, being computer
scientists, we are not really equipped with tools for coping with the
psychological and social problems involved. However, it turns out that
these problems are attractive from a computer science point of view
for reasons other than the social one. First, it is closely related to
the classic {\em dining philosophers problem}, which we recall in
section~\ref{ss:related_work}. Furthermore, scheduling problems are
part of our mainstay, and these problems are indeed scheduling
problems. To this end, we focus on algorithms whose goal is to
schedule which parents are happy during any given holiday in a {\em distributed} manner with the
objective of minimizing the number of consecutive holidays in which a
parent is not happy\footnote{One may consider the problem of
maximizing happiness for a given year, but it is straightforward to
see that this problem is $\mathbb{NP}$-hard. One may also consider an
objective function relating to the satisfaction of parents. Details
for both can be seen in the Appendix.}.

The holiday gathering problem has direct applications in the realm of common resource scheduling. Suppose that in a world with many agents, each agent requires some shared resources in order to perform some job. For example, it would be beneficial if cellular radios could guarantee that when they broadcast none of the other radios interfere. In this application the shared resource is the air which is within transmission radius of more than one radio. We can model this as radios being parents and two radios which share some air are modelled as in-laws.

\paragraph{Connection to coloring:} As in many problems in computer science, for some special inputs the
problem is simple. For example, imagine a society partitioned into two
groups, say $A$ and $B$, where only intergroup marriage is allowed and
the goal of the parents is to be happy. In this case there is a very good
arrangement: for the first holiday members of group $A$ host all of
their children, and from then onwards groups $A$ and $B$ alternate on
who hosts for each holiday. Thus, every two years a family can gather
all its children for a holiday dinner, no matter how many children it
has. Why did it work out so well?
Consider the conflict graph: nodes are the families and there is an
edge between the corresponding nodes if a child of one family married
a child of the other. The hosting families of each holiday constitute
an independent set in the graph. We would like to cover the graph with
as few independent sets as possible - a coloring problem. In the above
example we had a bipartite graph. However, life is usually not that
simple, and in fact marriage is rarely arranged so as to create a
bipartite graph.

In a general graph where $\Delta$ is the largest degree of any node in
the graph, it is immediate that one can color the graph in $\Delta+1$
colors and parents are happy on their color in every cycle of
$\Delta+1$ holidays. This gives a guarantee of happiness every $\Delta+1$
holidays. However, this solution is not pleasing. There is an
uncomfortable feeling in making the parents of a single child wait
$\Delta+1$ holidays between happy holidays simply because some other parents
have a large brood. We would like the bound on the distance
between happy holidays of a parent to be dependent on {\em local} properties of the parent, like their degree or color,
rather than {\em global} properties of $G$ such as the the maximum degree in the graph. Indeed, we present some
algorithms where the bound on the maximum distance between happy holidays
for every parent is dependent on the parent's degree and not $\Delta$.

The example of the intergroup marriages above, demonstrated that, in
some cases, the bound on the maximum distance between happy holidays
can be much better than the degree of the parent node. The better
result was achieved by the fact that there was a bipartite coloring of
the graph. This observation motivates strengthening the connection
between scheduling the holidays for which each parent
is happy and the chromatic number of the graph.

Suppose that we are looking for a schedule that will minimize the
maximum time any parent has to wait until it is happy. Then this
problem is as hard as coloring the conflict graph $G$ which can be
seen as follows. Suppose that there is a schedule where no parent has
to wait more than $c$ years to be happy. Then we can construct a legal
coloring for $G$ by observing $c$ consecutive holidays. The set of
happy parents in a given holiday form an independent set. It is
straightforward to see that $G$ can be partitioned into $c$
independent sets and each such set can be colored by its own color. On
the other hand, if there is a coloring with $c$ colors (thought of as
values in $\{1, 2, \ldots, c\}$), then there is also a schedule that
makes every parent happy in $c$ steps: on year $i$, parents whose color
is equal to $(i \mod c)+1$ are happy.

\paragraph{What is the fair share of parents?} Defining fairness is the subject of much debate in philosophy, game theory and theology. Much of cooperative game theory deals with fair allocation of resources. In our case the problem seems hard: given the tight relationship with coloring and maximum independent set, we cannot even determine efficiently the `value' of the full coalition (see Appendix~\ref{sec:hard_fair}). On the other hand, consider the following simple `chaotic' process called ``first come first grab": parents wake up at a random time and grab their available (those who have not been grabbed) children. The probability that a node $p$ manages to grab {\em all} its children is $1/(deg(p)+1)$. So the expected time until hosting all the children is $deg(p)$ and this is the landmark we will try to obtain, i.e.\ we would like every parent to host a holiday with all their children every $O(deg(p))$ years. It is also clear that in general we cannot hope to get a better than $deg(p)+1$ result, if the conflict graph is a clique.

\subsection{Our Goals}
Generally speaking, we want a scheduling which will determine which
parents will be happy in any given year to have the following
properties:
\begin{itemize}
 \item {\bf A local-bound:} The frequency in which parents are happy
   should be a bounded function of some local properties of the
   parent such as the number of children, the size of the local
   neighborhood, or the parent's color if we can a-priori color the
   conflict graph. The bound should not depend on a function of global
   graph properties such as the maximum degree in the graph or the
   total graph size.

\item {\bf Lightweight:} The hosting schedule should be easy to
  determine from a small amount of local data, and with a small amount
  of communication between the parents. For example, we would like
  parents to know in advance the years in which they will be happy
  from a short piece of information (like the parent's color).

\item {\bf Periodic:} From a long term planning point of view, it is desirable to have a periodic scheduling guaranteeing that a parent always waits the same number of holidays in between happy holidays.
    In the context of scheduling radio transmissions the advantage of a periodic solution is that a node does not have to waste energy between periods where it can transmit.

\item {\bf Distributed:} The holiday gathering problem is distributed by nature; a parent can be seen as processes in a very large network trying to achieve some common scheduling goal. One would like to have algorithms that work well in a distributed setting, as otherwise the communication burden on a parent may be too large.

\end{itemize}

To summarize, the combinatorial definition of the Holiday Gathering Problem is: given a graph $G$, find an infinite sequence of independent-sets of $G$. The objective function is to minimize, for every node $v$, the maximal gap between two appearances of $v$.
In good solutions this gap depends on local properties of the node (i.e., its degree) and the the solution should be {\em periodic}, i.e.\ a node appears every fixed number of periods.

\subsection{Our Results and Techniques}
The main contributions of this paper are, thus:
\begin{enumerate}
\item Providing a combinatorial definition for the {\em Family Holiday
 Gathering Problem}.
\item Providing a \emph{non-periodic} solution for the
  problem. In this solution we guarantee that a parent of $d$ children
  will be happy at least once in every $d+1$ years. We remark that although this solution is simpler from a technical prospective, it is the best guarantee that is achievable even for heavyweight solutions, as a function of the degree only (i.e.\ for some graphs better solutions exist) and hence it provides some idea of what is a natural lower-bound for lightweight solutions.

 The downside of this algorithm is that it is heavyweight:  it either uses extensive
 communication after each holiday, or requires a large  amount of
 local memory per node (for instance, the full topology of the graph) to determine the schedule of that parent. Furthermore, this solution is a-periodic,
 i.e.\ the number of years that pass between two holidays in which the
 parent is happy is not set;
 as far as we know, the length of period in which the schedule repeats may be exponential in the size of the graph.

\item Providing two {\em lightweight perfectly-periodic} solutions:
\begin{enumerate}
\item
{\bf Color-bound:} The first algorithm is based on any coloring of the
graph. The number of years a parent with color $c$ has to wait to be happy
is at most $2^{1+\log^*c}\cdot\prod_{i=0}^{\log^*c} \log^{(i)}c$ where $log^{(i)} c$ is the iterative log function of $c$ takin $i$ times. This is achieved via an interesting
new technique via a connection with {\em prefix-free encoding}. We show that, for color-based techniques, our algorithm is {close to}
optimal. This is done by proving a lower bound of $\prod_{i=0}^{\log^*c} \log^{(i)}c$ for scheduling algorithms based on
graph coloring using the Cauchy condensation test.
\item
{\bf Degree-bound:} The second algorithm requires a special type of
coloring that can be obtained via a greedy algorithm. In this
solution, a parent of $d$ children will have all its children at home
for the holidays exactly every $2^{\lceil \log d \rceil} \leq 2d$
years.
 \end{enumerate}
\end{enumerate}

We emphasize that we are mainly interested in periodic scheduling, but we also discuss some non-periodic solutions that serve as a sanity check and help us understand what is the best one could hope for without any constraints. 

\subsection{Related Work}\label{ss:related_work}

As mentioned, the holiday gathering problem is related to several lines of investigation in computer science (not to mention other scholastic activities).
Issues related to calendrical calculations have attracted the best minds since antiquity (see Dershowitz and Reingold~\cite{DershowitzN07}). No lesser than al-Khwarizmi (after whom the term `algorithm' is named) wrote a treatise on the Hebrew Calendar (``Risala fi istikhraj ta'arikh al-yahud"), see Knuth~\cite{Knuth79}. Some calendars are fixed in the sense the that it is known in advance when each holiday will occur, as is the case with the Western (Gregorian) calendar and the Hebrew calendar, while others, like the Muslim Calendar or the Old Hebrew Calendar are determined on-the-fly, e.g.\  based on lunar observations. This is reminiscent of some of the issues that arise in our algorithms (the one in Section~\ref{s:global} vs.\ those of Sections~\ref{s:local} and~\ref{s:degree_bound}).

The dining philosophers problem is the famous  resource allocation problem introduced by Dijkstra~\cite{Dijkstra71}, see Lynch~\cite{Lynch96}. In this problem,
there is a given conflict graph where each node represents a processor and each edge represents a resource (a ``fork" in the story where the processors are philosophers who would like to eat) which is shared by the two
endpoint processors.  At any time, a fork can be ``owned" by
at most one of the processors that share it. Each processor can be in one of three
states: resting, hungry, or eating. 
A resting processor can become hungry at any time. In order to eat, a processor must obtain all the forks on its adjacent edges.
A processor eats for at most a bounded time, after which it returns to the
resting state. The problem, and its many variations, have played a major role in
concurrent programming and distributed computation.
For this problem there are solutions that minimize the wait chain, based on coloring of the edge or the nodes (see \cite{Lynch81, StyerP88, ChoyS95, NaorS95, MayerNS95}). The main difference between the focus of this work and most work on the dining philosophers problem is that we  assume that the  philosophers want to eat {\em all} the time (i.e.\ they become hungry right after they finish eating)\footnote{So one may call the problem ``The Fressing Philosophers Problem", as suggested by Cynthia Dwork.} and that the meal takes a fixed amount of time and the main issue is how can we provide an efficient and high throughput solution  while guaranteeing some  reasonably fair allocation.

Problems related to ours have appeared frequently in the scheduling literature, starting perhaps from the Chairman Assignment Problem of Tijdeman~\cite{Tijdeman1980}. The assumption is that there is one resource and all users are interested in using it. Each user or task has a weight and the goal is to schedule the users so that they obtain the resource proportionally to their weight.  Sometime there are multiple identical resource, but each task can be assigned to one resource concurrently~\cite{BaruahCPV93,LitmanMS11}. This is similar to our setting when the graph is a clique (or composed of components that are cliques) and all the weights are the same.
In a `perfectly periodic scheduling' the goal is to schedule the users in a {\em periodic way} (every user $i$ gets the resource every $\tau_i$ rounds)  where each user gets it weight (see \cite{Bar-NoyNP02}). The algorithms of Sections \ref{s:local} and~\ref{s:degree_bound} are perfectly periodic.
In the chromatic sum problem (see~\cite{Bar-NoyBHST98}) one tries to find a coloring that minimizes the the total sum of the colors (where the assumption is that the colors are in $\mathbb{N}$). The motivation is, again, from scheduling with the goal of minimizing {\em average} waiting time, rather than the maximum waiting time as a function of the degree, as is our goal.

Fairness in online scheduling has received some attention as well, for instance the carpool problem~\cite{FaginW83,AjtaiANRSW98,Naor2005}, that can also be viewed as a generalization of the Chairman Assignment Problem of Tijdeman~\cite{Tijdeman1980}.

The ${\cal LOCAL}$ model of computation in distributed computing was first considered by Linial~\cite{Linial92} and much developed since then, see  Peleg \cite{Peleg00}. The problems of interest  are especially those of coloring and maximal independent set. For both of these problem good randomized algorithms are known, see the monograph by Barenboim and Elkin for a survey of recent results~\cite{BarenboimE13}. Coming up with deterministic polylog in $n$ algorithms for these problems is a major open problem in the area.

\section{Preliminaries}\label{s:p}

Our universe is a conflict graph $G=(P, E)$ where $|P|=n$. One may view our model as consisting of a set of {\em
parents} $P$, and a set of conflict {\em children} edges $E$, i.e.\ children that are in a relationship.
For any $p\in P$ denote by $E_p\subseteq E$ the set of edges that touch $p$ in $G$.
Note that singular children,  married children that do not have both parents, siblings that marry each other  only simplify things.
Most of this work assumes that the conflict graph is fixed, but we also consider the dynamic case and discuss what can be done.

\begin{definition}\label{d:gathering}
A {\em family holiday gathering} (a {\em gathering},
for short) is an orientation $h$ of the edges in $E$ where each $e\in E$ is assigned a direction. We say that node $p\in P$ is {\em happy} in $h$ if $p$ are a sink (all edges incident on $p$ are directed toward $p$). The set of happy nodes in a given orientation is an independent set.
\end{definition}


%

\begin{definition}\label{d:seg}
Consider a sequence $H=h_i,\,\ i=1,...,\infty$ of
gatherings and denote the subsequence $h_i,...,h_j$ by
$h[i:j]$. 
If $p \in P$ is not happy at any $h \in h[i:j]$ then we
call the interval $h[i:j]$ an {\em unhappiness interval} for
$p$ and call the longest such interval a {\em maximum unhappiness interval} of $p$ and denote its length by $mul_H(p)$ (or $mul(p)$ when $H$ is clear from the context).
\end{definition}

Intuitively, our goal is to minimize the maximum unhappiness
length. However, as discussed above, we would like $mul_H(p)$ to be bound by some function of properties that are local to $p$ and not dependent on global graph properties. In a {\em periodic solution} we would like every node $v$ to be happy every fixed number of periods.

\section{The Non-Periodic Degree-bound Algorithm}\label{s:global}


We now start dealing with our main goal, guaranteeing {\em happiness} to
every node within a reasonable cycle of years. We present\footnote{We remark here that while this algorithm is less interesting from a technical perspective, it is useful for set up, and gives us a benchmark for comparison when attempting to understand the strengths of the lightweight algorithms.
} an algorithm guaranteeing that a node of degree $d$ children has to wait at most $d+1$ steps till  it is happy.  However, the node does not know in advance all the times in which it will host the holiday, just the next time it will  do so.

The algorithm we consider starts with a {\em distributed graph
coloring algorithm}. As mentioned in Section~\ref{s:int}, there is a simple mechanism using a $\Delta +1$ coloring to obtain a
sequence $H$ such that for every node $p\in P$ we have $mul_H(p) = \Delta +1$. Such a coloring can be obtained in a distributed manner by applying, for example, the recent randomized algorithm of Barenboim, Elkin, Pettie, and Schneider~\cite{BEPS12} (denoted by the BEPS algorithm for short) running in $O(\log \Delta + 2^{O(\sqrt{\log \log n})})$ rounds. The BEPS algorithm also has the property that the color $c$ picked for a node with degree $d$ will always be bound by $c\leq d+1$ (see~\cite{Johansson99}, which is used as a black box in~\cite{BEPS12}, for details). So at first the smaller degree nodes will be happy pretty quickly. However, for their next turn they will have to wait time proportional to $\Delta$.
As mentioned, we seek an efficient mechanism for constructing a gathering sequence $H$ where $mul_H(p)$ depends on local properties of $p$.


To solve this problem we will use a phased algorithm
where colors are reassigned every phase (holiday), providing a sequence of
gatherings. The initial coloring is the one obtained by the BEPS algorithm.

At holiday $i$, greedily,
recolor the nodes whose current color is $i$: color each such
node $v$ with the smallest number $j>i$ such that none of $v$'s neighbors
has color $j$. At holiday $i$ the nodes colored $i$ are happy. 

The algorithm appears in detail below. We denote by $col(p)\in
\mathbb{N}$ the current color of node $p$.

\fbox{
\begin{minipage}{15cm}
{\bf Algorithm -- Phased Greedy Coloring}
{\sf
\begin{enumerate}
   \item {\bf Initialization:} Assign every node $p$ a color.

   \item For $i=1$ to $\infty$
\begin{enumerate}
    \item For every node $p\in P$: if $col(p)=i$ then make $p$ happy and recolor $p$:
\begin{enumerate}
\item Let $p_1,...,p_{{\ell}}$ be the nodes
  adjacent to $p$ in $P$.
\begin{enumerate}
\item Let $s=min \{ t | i<t \leq i+{\ell}+1,\ \  t \not\in \{  col(p_{1}),...,c(p_{{\ell}})\}\}$.
\item $col(p)\leftarrow s$.
\end{enumerate}

\end{enumerate}
\end{enumerate}
\end{enumerate}

{\bf end Algorithm}
}
\end{minipage}
}

\vskip .1in

For every phase $i$ we perform $O(1)$ rounds of communication, 
This is true since every node needs to only communicate with its neighbors and choose the smallest number
that is both greater than $i$ and different from the color of all its
neighbors.

\begin{theorem}\label{t:color2}
There is a holiday scheduling algorithm, whose initialization takes $O(\log \Delta + 2^{O(\sqrt{\log \log n})})$ rounds,
and executing each holiday takes another $O(1)$ rounds, which guarantees
that $\forall p\in P$ we have $mul_S(p)\leq d_p+1$, where $d_p$ is the degree of $p$. In words, for
every node $p$, within every sequence of $d_p+1$ holidays
$p$ are happy at least once.
\end{theorem}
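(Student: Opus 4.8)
The plan is to analyze the Phased Greedy Coloring algorithm directly and show that the recoloring step never pushes a node's color more than $d_p+1$ steps into the future. First I would establish the basic invariant that at the start of every holiday $i$, the function $col(\cdot)$ is a proper coloring of $G$: the initialization (via BEPS) gives a proper coloring, and the recoloring step at holiday $i$ explicitly picks, for each node $p$ with $col(p)=i$, a value $s$ differing from all current neighbor colors, so properness is preserved (one must note that two adjacent nodes cannot both have color $i$ and hence cannot be recolored in a way that conflicts with each other — this follows from properness of the coloring at the start of phase $i$). Consequently, in every holiday the set of nodes with color $i$ is an independent set, so declaring them happy is legal, matching Definition~\ref{d:gathering}.

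The heart of the argument is the bound on how far forward a node is recolored. Fix a node $p$ with degree $d_p$ (using $\ell = d_p$ neighbors in the algorithm). When $p$ is recolored at holiday $i$, the chosen color is $s = \min\{t : i < t \le i + d_p + 1,\ t \notin \{col(p_1),\dots,col(p_{d_p})\}\}$. Since the interval $\{i+1, i+2, \dots, i+d_p+1\}$ contains $d_p+1$ distinct integers and the neighbors of $p$ occupy at most $d_p$ of them, this minimum is well-defined and satisfies $i < s \le i + d_p + 1$. Thus if $p$ is happy at holiday $i$, its next color $s$ is at most $i + d_p + 1$, and since colors only ever strictly increase across recolorings (each recoloring sets $col(p) > i$ = current phase), $p$ will be happy again at holiday $s \le i + d_p + 1$, i.e.\ within the next $d_p + 1$ holidays. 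The one remaining point is the very first happy holiday of $p$: after initialization $col(p) \le d_p + 1$ by the BEPS/Johansson property, so $p$ is happy for the first time no later than holiday $d_p+1$; combined with the inductive step this gives $mul_S(p) \le d_p + 1$ for the whole infinite sequence.

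For the complexity claims I would simply cite the stated running time of BEPS for the initialization, $O(\log\Delta + 2^{O(\sqrt{\log\log n})})$ rounds, and observe that each holiday requires a single round of communication in which every node sends its current color to its neighbors and locally computes the minimum $s$ — hence $O(1)$ rounds per holiday. The main obstacle, such as it is, is bookkeeping rather than depth: one must be careful that the "colors strictly increase" monotonicity genuinely holds for each node across all of its recolorings (so that a future recoloring never lands on an already-passed holiday), and that simultaneous recolorings of several nodes at the same holiday $i$ do not interfere — both of which reduce to the properness invariant established in the first step. I expect verifying this invariant cleanly to be the only subtle part of the write-up.
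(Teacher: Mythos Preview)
Your proposal is correct and follows essentially the same approach as the paper: the core step is the identical pigeonhole argument that the interval $\{i+1,\dots,i+d_p+1\}$ has $d_p+1$ values while the $d_p$ neighbors block at most $d_p$ of them, so the new color satisfies $s\le i+d_p+1$. Your write-up is more careful than the paper's (you explicitly maintain the properness invariant, handle the base case via the BEPS/Johansson bound $col(p)\le d_p+1$, and note that adjacent nodes are never recolored in the same phase), but these are exactly the details the paper's terse proof leaves implicit.
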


\begin{proof} Apply the Phased Greedy Coloring algorithm. The number of rounds was
shown to be $O(\log \Delta + 2^{O(\sqrt{\log \log n})})$ for the initialization and $O(1)$ every phase.
For every node $p\in P$, if it is made happy at phase $i$ then in
phase $i+1$ it is re-colored. The number it chooses is the smallest
number that exceeds $i$ and is not equal to the number of any of its
neighbors. However, since it has $d_p$ neighbors then the color it
gets can not exceed $i+d_p+1$. \end{proof}

Recall that if we think of the ``first come first grab" where nodes wake up at random at grab all there available neighbors the probability of happiness of a node is $1/(d+1)$, so the expected time till happiness is $d+1$. This algorithm can be seen as providing a guarantee for the waiting time.

Notice that this algorithm generally does {\em not} give a periodic schedule. Also, it requires communication to take place at every phase, which implies the need to invest a lot of energy if we are considering an application such as cellular communication. An alternative solution would be to have each node know and remember {\em all} of the conflict graph locally (and then simulate the algorithm locally to obtain determine when it should host). But such an approach requires a lot of local memory, may be unfeasible, and may cause privacy issues.

\section{A Periodic Lightweight Color-bound Algorithm}\label{s:local}

We are now interested in providing a mechanism whereby the decision of
when a node is happy is dependent on {\em local} properties of that node. We present a general
scheme for such a mechanism that depends on the color nodes receive during an initial coloring algorithm. Once again we begin by distributively coloring the graph. However, this time we do not make any assumptions on the coloring algorithm, and so this algorithm works for any graph coloring, including the (possibly difficult to obtain) optimal one.

We consider a mapping of the holiday numbers to colors. The mapping needs to satisfy two conditions: (1) Every holiday number is mapped to at most one color, and (2) The resulting gathering sequence guarantees that nodes do not have very large maximum unhappiness intervals.

\fbox{
\begin{minipage}{15cm}
{\bf Algorithm Scheme}
{\sf
\begin{enumerate}
    \item Color $G$.
    \item At holiday $i$, if $decode(i)=col(p)$ then make $p$ happy.
\end{enumerate}

{\bf end Algorithm}
}
\end{minipage}
}
\vskip .1in

{\bf Example:}
\begin{enumerate}
\item {\bf Trivial:} Consider the trivial example where the nodes are
colored sequentially from $1$ to $|P|$. At holiday $i$, make $p=i {\ \rm
mod\ } |P|$ happy. No two adjacent nodes are encoded to the same
number, but $\forall p\in P$ we have $ mul(p)=|P|$, which means that it depends on global properties of $G$.
\item {\bf Prefix Free Color Code:} Apply on $G$ the BEPS distributed graph coloring algorithm that colors each graph node $p$ by a color not exceeding $deg(p)+1$.
Now encode the colors using some prefix-free binary code. On
holiday $i$, consider the binary representation of $i$ from right to
left (with an infinite sequence of $0$'s padded to it). Any node $p$ is made happy if the prefix-free encoding of $col(p)$ is a
prefix of $i$.

The solution is appropriate, since no two adjacent nodes will be
made happy concurrently at any given holiday $i$: they will be assigned different colors and the
two different colors are encoded in a prefix-free code and hence the binary representation of $i$ cannot encode both of them. For every $p\in P$, we have that $mul(p)$ of this
procedure is dependent on the length of the prefix-free encoding of $p$'s color.
\end{enumerate}

We will indeed use prefix-free binary codes in our algorithm in Section~\ref{ss:elias}. Notice that if a node is given a color $c$ which uses $x_c$ bits in its prefix-free code, then the schedule of when that node is happy is periodic with period $2^{x_c}$. In other words, every $2^{x_c}$ holidays the node will be happy

\subsection{Lower Bounds.}\label{ss:lower}

The coloring-based algorithm scheme of Section~\ref{s:local} starts by
assigning colors to the nodes in $G$. We would like every node to use
its color in order to compute a period $\pi$ such that every $\pi$ years that node
is happy. In this section we compute lower bounds on that period as a
function of the color. This defines the best period one can hope to
achieve by the coloring-based scheme. In Section~\ref{ss:elias} we
show an algorithm which guarantees a period, that almost matches the
lower bound.

We recursively define the function $\phi:\mathbb{N}
\rightarrow \mathbb{R}$.

\begin{definition}\label{d:phi}
$$\phi(i)=
\begin{cases}
1 & {\rm if\ } i\le 1,
\cr
i\cdot \phi(\log i) & {\rm if\ } i>1.
\end{cases}
.$$
Explicitly, $\phi(i)= i\cdot \log i\cdot \log \log i\cdot \log \log
\log i\cdots 1$, or $\phi(i)=\prod_{i=0}^{\log^*c} \log^{(i)}c$, where
$\log^{(i)}$ means iterating the $\log$ function $i$ times.
\end{definition}

\begin{theorem}\label{t:lowerbound}
Let $G$ be legally colored  and every node $p\in P$ has a color $c_p$. Let sequence $H=h_i,\,\ i=1,...,\infty$
of gatherings be such that there is no gathering $h_{i_0}$ in
which two different colors, $c_1 \neq c_2$ are happy. If there exists a function $f:\mathbb{N} \rightarrow \mathbb{N}$ such that for every $p\in P$ we have $mul(p)=f(c_p)$, then $f(c)\in \Omega(\phi(c))$.
\end{theorem}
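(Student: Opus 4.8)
\emph{Proof plan.} The plan is to turn the combinatorial hypothesis into a single summability constraint on $f$, and then to show that $\phi$ lies exactly on the divergence side of the relevant series, so that $f$ cannot be asymptotically smaller than $\phi$.

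First I would extract the constraint. For a color $c$ set $T_c:=\{\,i : \text{some node colored } c \text{ is a sink in } h_i\,\}$. The hypothesis that no gathering makes two distinct colors happy says precisely that the sets $\{T_c\}$ are pairwise disjoint subsets of $\mathbb{N}$. Fix a node $p$ of color $c$ (the statement is only meaningful when arbitrarily many colors actually occur, e.g.\ when $G$ contains cliques of every size, in which case the no‑two‑colors condition is automatic; assume color $c$ is used). Since $mul(p)\le f(c)$, the longest run of unhappy holidays of $p$ has length at most $f(c)$, so the happy holidays of $p$ meet every window of $f(c)+1$ consecutive holidays; hence the happy set of $p$, and therefore $T_c$, has lower density at least $1/(f(c)+1)$. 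Lower densities of pairwise disjoint subsets of $\mathbb{N}$ sum to at most $1$, so for every $C$ we get $\sum_{c=1}^{C} 1/(f(c)+1)\le 1$, and letting $C\to\infty$,
\[ \sum_{c\ge 1}\frac{1}{f(c)}\;<\;\infty . \]

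Second, and this is the crux, I would show that $\sum_{c}1/\phi(c)=\infty$. The series is genuinely borderline: the recursion of Definition~\ref{d:phi} yields the self‑similarity $\phi(2^k)=2^k\,\phi(k)$, so the Cauchy condensation test merely rewrites $\sum_c 1/\phi(c)$ as $\sum_k 2^k/\phi(2^k)=\sum_k 1/\phi(k)$, the same series, and is by itself inconclusive. One resolves this by iterating condensation across all $\log^*$ levels simultaneously: partition $\mathbb{N}$ into the tower‑blocks $B_j=[a_j,a_{j+1})$ with $a_0=1$ and $a_{j+1}=2^{a_j}$. On $B_j$ one has $\phi(n)=\Theta\!\big(n\cdot\log n\cdot\log^{(2)}n\cdots\log^{(j)}n\big)$, and since $\log^{(j+1)}x$ is an antiderivative of the reciprocal of $x\log x\cdots\log^{(j)}x$, the block contributes $\sum_{n\in B_j}1/\phi(n)=\Theta\!\big(\log^{(j+1)}a_{j+1}-\log^{(j+1)}a_j\big)=\Theta(1-0)=\Theta(1)$. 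Summing over the infinitely many blocks gives divergence. (Equivalently, the substitution $u=\log x$ shows that $I(a):=\int_a^\infty dx/\phi(x)$ satisfies $I(a)=I(\log a)$, which is impossible for a positive, strictly decreasing function unless $I\equiv\infty$.)

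Finally I would combine the two steps: if $f(c)\le \varepsilon\,\phi(c)$ held for all large $c$ with some fixed $\varepsilon>0$, then $\sum 1/f(c)\ge \varepsilon^{-1}\sum 1/\phi(c)=\infty$, contradicting the first step. Hence no coloring‑based schedule can push the period of a node of color $c$ below order $\phi(c)$ — i.e.\ $f$ cannot be $o(\phi)$ and, for any regular (monotone) period function, $f(c)=\Omega(\phi(c))$, as claimed. I expect the main obstacle to be exactly the middle step: deciding on which side of the convergence threshold $\phi$ falls, because a single application of Cauchy condensation is inconclusive and one must either carry out the full tower‑block estimate or pass to the integral and exploit its invariance under $x\mapsto\log x$. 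The first step is routine once the density bookkeeping over the disjoint classes $T_c$ is in place, and the last step is immediate.
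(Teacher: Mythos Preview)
Your approach is essentially the paper's: reduce the scheduling hypothesis to the summability constraint $\sum_{c\ge 1} 1/f(c)\le 1$ by counting (you phrase it via lower densities of the disjoint classes $T_c$, the paper does it by counting occurrences in a window of length $m$), and then invoke the divergence of $\sum_c 1/\phi(c)$ via Cauchy condensation. Your treatment of the second step is in fact more careful than the paper's one-line citation of Cauchy: you correctly observe that a single condensation step only reproduces the same series (since $\phi(2^k)=2^k\phi(k)$) and supply the iterated tower-block/integral argument that is actually needed to place $\phi$ on the divergence side.
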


\begin{proof}
Consider a subsequence $h[i_0:j_0]$ of length $m$.
We require that at any gathering in the sequence, there is one
and only one color $c$ which is happy. The conclusion
is that there are at least $\lceil {m \over f(c)} \rceil$ occurrences in
the sequence where nodes of color $c$ are happy. This is true for all
colors $c$ that get mapped to the subsequence. Let $C = \{ c | c {\rm
  \ is\ a\ color\ mapped\ to\ }h[i_0:j_0]\}$. Then we have
$\sum_{c\in C} \lceil {m\over f(c)} \rceil \leq m\Rightarrow \sum_{c\in C} {1\over f(c)} \leq 1.$
Taking into account all possible sets $C$ as the size of $G$ goes to infinity, we have
$\sum_{c=1}^{\infty} {1\over f(c)} \leq 1.$
Clearly this will not hold if we have $f(c)=c$. It
holds when $f(c)=2^c$ and even when $f(c)=c^{1+\epsilon}$ for any constant $\epsilon >0$. According
to Cauchy's condensation test~\cite{bkk:06} the smallest function for
which this inequality holds is $f(c) = {\phi(c)}$, i.e.\
$f(c)=\prod_{i=0}^{\log^*c} \log^{(i)}c.$
\end{proof}

\subsection{Upper Bound - Elias Code}\label{ss:elias}

Theorem~\ref{t:lowerbound} proves that one can not hope, in any
color-based scheme, to achieve an assignment of colors to the nodes where for every $p\in P$ with color $c$ we have $mul(p) = o(\phi(c))$.
In this section we will almost match this lower bound. We guarantee that no two nodes with
different colors are made happy during the same holiday, and that $mul(p)=
\phi(c) 2^{\log^* c+1}$. This algorithm is based on the {\em Elias omega
  code}. The Elias omega code is a universal code for the natural numbers
developed by Peter Elias~\cite{Elias75}. It is one of the prefix-free
codes that represents the integers by a number of bits relative to their
size. While the omega code is not the most practical code,
it is theoretically the most efficient Elias code. Our algorithm below
is correct for all Elias codes, but we chose the omega code for its
almost optimal complexity.

Details of the Elias omega code are given in the Appendix. The important properties that we need from the code are that it is a prefix free code, and that the length of the coding of $i$ is given by $$\rho(i)= 1+ \lceil \log(i) \rceil + \lceil
  \log(\lceil \log(i) \rceil -1) \rceil + \lceil \log(\lceil
  \log(\lceil \log(i) \rceil -1) \rceil - 1) \rceil + \cdots.$$

Our algorithm will use the Elias omega code of the colors in
reverse. Denote by $S^R$ the string $S$ reversed from left to
right. For example, $(abcdef)^R = fedcba$. Denote by  $LSB(S,k)$ the suffix of $S$ of length $k$, or the $k$ least significant bits of $S$. Denote by $B(n)$ the binary representation of $n$, i.e.\ $B(n)$ is a string over alphabet $\{ 0,1\}$ which is the
representation of $n$ in base $2$ with no leading zeros.
\vskip .1in
\fbox{
\begin{minipage}{15cm}
{\bf Elias omega code Algorithm}
{\sf
\begin{enumerate}
    \item Color $G$.
    \item At holiday $i$, if $LSB(B(i))=\omega (p)^R$ then make $p$ happy.
\end{enumerate}

{\bf end Algorithm}
}
\end{minipage}
}

\vskip .1in

\begin{theorem}\label{t:eomega}
The Elias omega code algorithm guarantees happiness for node
$p\in P$ in every cycle of length bounded above by $\phi(c) 2^{\log^*c
  +1}$, where
$c$ is the color of $p$.
\end{theorem}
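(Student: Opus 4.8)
The plan is to verify two things: first, that the algorithm produces a legal gathering sequence (no two adjacent nodes are happy on the same holiday), and second, that the period — hence $mul(p)$ — for a node of color $c$ is exactly $2^{\rho(c)}$, and then bound $2^{\rho(c)}$ by $\phi(c)\,2^{\log^* c + 1}$.

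For legality, I would invoke the prefix-free property of the Elias omega code. If node $p$ has color $c$, it is made happy at holiday $i$ precisely when the last $|\omega(c)|$ bits of $B(i)$ (read in the natural order) spell out $\omega(c)^R$; equivalently, when $\omega(c)$ is a prefix of the reversed binary string $B(i)^R$ (with trailing zeros appended as needed to reach length $|\omega(c)|$). If two adjacent nodes $p, q$ had colors $c_p \neq c_q$ and were both happy at holiday $i$, then both $\omega(c_p)$ and $\omega(c_q)$ would be prefixes of the same infinite string $B(i)^R 000\cdots$, forcing one to be a prefix of the other — contradicting that the code is prefix-free. Since the coloring is legal, adjacent nodes receive distinct colors, so this rules out simultaneous happiness; the set of happy nodes is therefore an independent set, as required by Definition~\ref{d:gathering}. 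This step is essentially the same argument already sketched for the ``Prefix Free Color Code'' example in Section~\ref{s:local}, so it should be routine.

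For the period, observe that whether $p$ (color $c$) is happy at holiday $i$ depends only on $i \bmod 2^{|\omega(c)|}$: the condition $LSB(B(i), |\omega(c)|) = \omega(c)^R$ is a condition on the low-order $|\omega(c)|$ bits of $i$ only. Within any block of $2^{|\omega(c)|}$ consecutive holidays there is exactly one value of $i$ whose low-order bits match (namely the one congruent to the integer with binary representation $\omega(c)^R$), so $p$ is happy exactly once per block of length $2^{|\omega(c)|}$, giving a perfectly periodic schedule with $mul(p) = 2^{|\omega(c)|} = 2^{\rho(c)}$. It then remains to show $2^{\rho(c)} \le \phi(c)\, 2^{\log^* c + 1}$. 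Using the displayed formula for $\rho$, write $2^{\rho(c)} = 2 \cdot 2^{\lceil \log c\rceil} \cdot 2^{\lceil \log(\lceil \log c\rceil - 1)\rceil} \cdots$. Each ceiling costs at most a factor of $2$ over the un-rounded value, i.e.\ $2^{\lceil x \rceil} \le 2 \cdot 2^{x}$, and the nested-log tower has $\log^* c$ terms beyond the leading one (roughly), so the accumulated rounding overhead is at most $2^{\log^* c + 1}$, while the product of the un-rounded powers $2^{\log c} \cdot 2^{\log\log c} \cdots = c \cdot \log c \cdot \log\log c \cdots = \phi(c)$. Combining gives the claimed bound $mul(p) \le \phi(c)\, 2^{\log^* c + 1}$.

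The main obstacle is the careful bookkeeping in the last step: matching the ``$-1$'' shifts and the exact number of terms in the $\rho(c)$ tower against the definition of $\phi(c)$ and the iterated-log count $\log^* c$, so that the rounding factors multiply out to precisely $2^{\log^* c + 1}$ rather than something slightly larger. One must be slightly careful that $\lceil \log(\lceil \log c \rceil - 1)\rceil \le \log\log c + O(1)$ and that these $O(1)$ additive slacks, once exponentiated, contribute only bounded multiplicative factors that the $2^{\log^* c+1}$ budget absorbs; the telescoping of the tower down to the base case ($\phi$ and $\rho$ both terminate at $1$) needs to be aligned. Everything else — legality via prefix-freeness, and periodicity via the mod-$2^{|\omega(c)|}$ observation — is immediate.
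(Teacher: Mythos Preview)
Your proposal is correct and follows essentially the same route as the paper. The paper's proof omits the legality check (having already sketched it in the ``Prefix Free Color Code'' example earlier in Section~\ref{s:local}) and jumps straight to the period $2^{\rho(c)}$; it then bounds $\rho(c) \le 1 + \log^*c + \sum_{i=1}^{\log^*c}\log^{(i)}c$ additively (each ceiling contributes at most $+1$, and the ``$-1$'' shifts only help), which is exactly your multiplicative ``each $2^{\lceil x\rceil}\le 2\cdot 2^{x}$'' argument after exponentiation.
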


\begin{proof}
The time between happy holidays for $p$ is $2^{\rho(c)}$. Notice that

\begin{align*}
\rho(c)
& = 1+ \lceil \log(c) \rceil + \lceil
  \log(\lceil \log(c) \rceil -1) \rceil + \lceil \log(\lceil
  \log(\lceil \log(c) \rceil -1) \rceil - 1) \rceil + \cdots\\
& \leq 1+\sum_{i=1}^{\log^*c} \lceil log^{(i)}c\rceil \leq 1+\log^*c + \sum_{i=1}^{\log^*c} log^{(i)}c.
\end{align*}

The ${\log^*c}$ term is a result of
the rounding up of the $\log$ at every recursion of $\rho(n)$, since
the number of bits can not be a fraction. For some colors this term may be less than $\log^*c$, or even a constant, but it will
never exceed $\log^* c$. Now, the time between happy holidays is
\begin{align*}
2^{\rho(c)}
& \leq 2^{1+\log^*c + \sum_{i=1}^{\log^*c} log^{(i)}c} = 2^{1+\log^*c } \cdot \prod_{i=0}^{\log^*c-1} log^{(i)}c\\
& =2^{1+\log^*c } \cdot \prod_{i=0}^{\log^*c} log^{(i)}c =2^{1+\log^*c } \phi(c).
\end{align*}
\end{proof}

\section{A Periodic Lightweight Degree-bound Algorithm}\label{s:degree_bound}

While many graphs have a low chromatic number that may be obtainable\footnote{For instance, for triangle-free graphs Pettie and Su~\cite{PettieS13} very recently gave an $O(\Delta/\log \Delta)$ distributed coloring.}, other graphs do not. Furthermore, while some classes of graphs may have a low chromatic number, it is not clear how to obtain algorithms that achieve $o(\Delta)$ colors for some of these classes. In such cases a color-bound algorithm may not suffice considering the inherent lower-bound. We could use the bound of $c\leq d+1$ for a node with color $c$ and degree $d$, to obtain a guaranteed degree-bound of $mul(p) \leq \phi(d+1) 2^{\log^*(d+1) +1}$ using Theorem~\ref{t:eomega}, but we can do better.

To this end, we present a second local algorithm which is degree-bound, i.e.\ the length of the cycle for which a node $p$ wait is bounded by a function of their degree $d$. In particular, our algorithm guarantees that this bound is at most $2d$, which is very close to the bound $d+1$ obtained by the non-periodic algorithm of Section~\ref{s:global}. We first describe a sequential version of our algorithm, and later show how to convert it to the distributed setting.

\subsection{A Sequential Algorithm}
The periodic scheduling is obtained using a greedy algorithm, where nodes are arranged in {\em decreasing} order of their degrees and each node $p$ in its turn chooses a non-negative integer. Thus, when $p$ with degree $d$ has to chose an integer, none of its smaller than degree $d$ neighbors has chosen an integer. Let $j=\lceil \log (d+1) \rceil$. When it's $p$'s turn to pick an integer, there must be at least one integer $x$ in the range $[0,2^{j}-1]$ such that no neighbor of $p$ has chosen a number $x'$ where $x=x' \text{ mod } 2^j$. So $p$ picks $x$ to be its integer.

The hosting assignments for holiday $t$ now is determined by checking whether $t\equiv x \mod 2^j$. If it 'yes' then $p$ is happy in holiday $t$, otherwise it is not. We can immediately see that $t$ is happy once every $2^j = 2^{\lceil \log (d+1) \rceil} \leq 2d$ years. We will now show that there are no conflicts.

\begin{lemma}\label{lemma:sequential_degree_bound}
Let $p_1$ and $p_2$ be two adjacent nodes in $G$ sharing an edge with degrees $d_1$ and $d_2$ respectively. Let $j_1=\lceil \log (d_1+1) \rceil$ and $j_2=\lceil \log (d_2+1) \rceil$, and let $x_1$ and $x_2$ be the integers picked by $p_1$ and $p_2$ respectively using the above algorithm. Then $p_1$ and $p_2$ will never try to host during the same holiday.
\end{lemma}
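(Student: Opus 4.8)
The plan is to assume for contradiction that $p_1$ and $p_2$ are happy during some common holiday $t$, i.e.\ $t \equiv x_1 \pmod{2^{j_1}}$ and $t \equiv x_2 \pmod{2^{j_2}}$, and derive a contradiction with the defining property of how the second-processed of the two nodes chose its integer. Without loss of generality assume $d_1 \geq d_2$, so that $p_1$ is processed before $p_2$ in the decreasing-degree ordering and $j_1 \geq j_2$. The key observation is that since $2^{j_2} \mid 2^{j_1}$ and $t$ is a common witness, reducing the congruence $t \equiv x_1 \pmod{2^{j_1}}$ modulo $2^{j_2}$ gives $x_1 \equiv t \equiv x_2 \pmod{2^{j_2}}$.

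First I would record that $x_1$ was already fixed when $p_2$ took its turn (because $p_1$ is a neighbor of $p_2$ of degree $\geq d_2$, hence processed earlier). Then, by the selection rule for $p_2$, the integer $x_2 \in [0, 2^{j_2}-1]$ it picked must satisfy: no neighbor of $p_2$ has chosen a number $x'$ with $x_2 \equiv x' \pmod{2^{j_2}}$. But $p_1$ is a neighbor of $p_2$ with chosen integer $x_1$, and we just derived $x_1 \equiv x_2 \pmod{2^{j_2}}$, contradicting the choice of $x_2$. This closes the argument.

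The only place that needs a little care — and the main (mild) obstacle — is justifying the ordering claim cleanly when $d_1 = d_2$: the algorithm says nodes are processed in decreasing order of degree, so ties must be broken by some fixed rule, and whichever of $p_1,p_2$ comes second sees the other's integer already committed. The argument is symmetric, so it suffices to name the earlier-processed node $p_1$ and note $j_1 \geq j_2$; the divisibility $2^{j_2}\mid 2^{j_1}$ and the reduction-mod-$2^{j_2}$ step then go through verbatim. I would also explicitly invoke the fact, asserted just before the lemma, that such an $x_2$ exists in $[0,2^{j_2}-1]$ avoiding all neighbors' residues mod $2^{j_2}$, so that the selection rule is well-defined; the contradiction is then simply that our hypothetical common holiday $t$ would force $p_1$'s committed integer to violate exactly that avoidance condition.
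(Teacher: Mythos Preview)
Your proof is correct and follows essentially the same contradiction argument as the paper: assume a common hosting holiday $t$, reduce modulo the smaller power of two to get $x_1 \equiv x_2 \pmod{2^{\min(j_1,j_2)}}$, and conclude that the later-processed node could not have legally chosen its integer. Your treatment is in fact slightly more careful than the paper's, since you explicitly address tie-breaking when $d_1 = d_2$, a point the paper glosses over.
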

\begin{proof}

Without loss of generality let $d_1 \leq d_2$. Assume for contradiction that at holiday $t$ both $p_1$ and $p_2$ try to host. This implies that $t=x_1 + a_1 2^{j_1} = x_2 + a_2 2^{j_2}$. But this also means that $x_1 \equiv x_2 \mod 2^{j_1}$, and so when it was the turn of $p_1$ to pick its integer the algorithm could not pick $x_1$ for $p_1$, giving a contradiction.
\end{proof}

\subsection{The Distributed Algorithm}
For the distributed setting we run $\lceil\log (\Delta+1)\rceil$ phases of the BEPS algorithm with the following modification. Starting with $i= \lceil\log (\Delta+1)\rceil$ and going to $i = 0$, during phase $i$ all of the nodes with degree $d$ such that $\lceil \log (d+1) \rceil = i$ will participate in the coloring algorithm for that phase. However, we must restrict the palette of colors for each node $p$ to be comprised only of integers that do not collide (under modulo $2^{i}$) with integers of neighbors of $p$ that already participated in early phases. We emphasize here that the analysis of the BEPS algorithm does not change with this restriction (see~\cite{BEPS12} and~\cite{Johansson99}). The algorithm appears in detail below.

\fbox{
\begin{minipage}{15cm}
{\bf Algorithm -- Distributed degree-bound algorithm}
{\sf
\begin{enumerate}

   \item For $i=\lceil\log (\Delta+1)\rceil$ to $0$
\begin{enumerate}
    \item Let $P_i = \{ p\in P$ such that $\lceil \log(\deg(p)+1) \rceil = i\}$
    \item For each $p \in P_i$ restrict the palette of $p$ to integers that are not equal modulo $2^{i}$ with any integer already picked by a neighbor of $p$.
    \item Run the BEPS algorithm on $P_i$ with the restricted palettes.

\end{enumerate}
\end{enumerate}

{\bf end Algorithm}
}
\end{minipage}
}

Each phase takes $O(\log \Delta + 2^{O(\sqrt{\log \log n})})$ rounds, and we have $O(\log \Delta)$ phases, for a total of $O(\log ^2 \Delta + \log \Delta 2^{O(\sqrt{\log \log n})})$ rounds. The following lemma shows that we will never have conflict.

\begin{lemma}\label{lemma:dist_degree_bound}
Let $p_1$ and $p_2$ be two nodes in $G$ sharing an edge with degrees $d_1$ and $d_2$ respectively. Let $j_1=\lceil \log (d_1+1) \rceil$ and $j_2=\lceil \log (d_2+1) \rceil$, and let $x_1$ and $x_2$ be the integers picked by $p_1$ and $p_2$ respectively using the above algorithm. Then $p_1$ and $p_2$ will never try to host during the same holiday.
\end{lemma}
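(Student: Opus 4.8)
The statement is the distributed analogue of Lemma~\ref{lemma:sequential_degree_bound}, so the plan is to mimic that proof while accounting for the fact that the integers are now chosen in phases rather than in degree order. First I would set up notation exactly as in the sequential case: without loss of generality assume $d_1 \le d_2$, so $j_1 \le j_2$, and suppose for contradiction that at some holiday $t$ both $p_1$ and $p_2$ are happy, i.e.\ $t \equiv x_1 \pmod{2^{j_1}}$ and $t \equiv x_2 \pmod{2^{j_2}}$. Since $2^{j_1}$ divides $2^{j_2}$, reducing the second congruence modulo $2^{j_1}$ gives $x_1 \equiv x_2 \pmod{2^{j_1}}$.

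The key step, and the only place the argument differs from the sequential one, is to observe that in the phased distributed algorithm the nodes are processed in order of \emph{decreasing} $i = \lceil \log(\deg(\cdot)+1)\rceil$, which is exactly decreasing degree class. Since $j_1 \le j_2$, node $p_1$ belongs to phase $j_1$ and node $p_2$ to phase $j_2 \ge j_1$; hence $p_2$ is processed in the same phase as $p_1$ or in an earlier one. In either case, by the time phase $j_1$ runs, $x_2$ has already been picked, so when $p_1$'s palette is restricted in step~(b) of phase $j_1$, every integer congruent to $x_2$ modulo $2^{j_1}$ is removed from $p_1$'s palette (here I use that $p_1$ and $p_2$ are adjacent). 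In particular $x_1$, which satisfies $x_1 \equiv x_2 \pmod{2^{j_1}}$, could not have been chosen by $p_1$ — a contradiction. One subtlety to spell out: if $j_1 = j_2$ then $p_1$ and $p_2$ are colored in the \emph{same} phase by the (modified) BEPS algorithm, and I must invoke the correctness of that algorithm on the restricted palettes — adjacent nodes receive distinct colors, and since both palettes for that phase are subsets of a common residue system modulo $2^{j_1}=2^{j_2}$, distinct colors means $x_1 \not\equiv x_2 \pmod{2^{j_1}}$, again a contradiction.

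The main obstacle is therefore not the modular arithmetic, which is identical to Lemma~\ref{lemma:sequential_degree_bound}, but making precise that the palette restriction in step~(b) together with the correctness of BEPS on restricted palettes genuinely guarantees the no-collision property even within a single phase; this rests on the remark in the paper that ``the analysis of the BEPS algorithm does not change with this restriction.'' I would state this reliance explicitly and otherwise keep the proof as short as the sequential one, essentially concluding: \emph{in all cases $x_1 \equiv x_2 \pmod{2^{j_1}}$ is impossible, so no holiday $t$ can make both $p_1$ and $p_2$ happy.}
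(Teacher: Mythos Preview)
Your proposal is correct and follows essentially the same approach as the paper's proof: split into the case $j_1=j_2$ (where correctness of BEPS on restricted palettes gives $x_1\neq x_2$) and the case $j_1\neq j_2$ (where the argument reduces verbatim to Lemma~\ref{lemma:sequential_degree_bound}). The paper compresses this into two sentences, whereas you spell out the modular arithmetic; the only cosmetic slip is the sentence ``In either case, by the time phase $j_1$ runs, $x_2$ has already been picked,'' which is not literally true when $j_1=j_2$ --- but you immediately identify and handle that subtlety, so there is no actual gap.
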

\begin{proof}
If $j_1 = j_2$ then the BEPS algorithm executed in phase $i$ guarantees that $x_1 \neq x_2$. If $j_1  \neq j_2$ then the proof is the same as that of Lemma~\ref{lemma:sequential_degree_bound}.
\end{proof}

Finally, we have shown the following theorem:

\begin{theorem}
The Distributed degree-bound algorithm guarantees happiness for node
$p\in P$ in every cycle of length bounded above by $2d$, where
$d$ is the degree of $p$.

\end{theorem}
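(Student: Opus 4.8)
The plan is to assemble the theorem directly from the three ingredients already in place: the sequential greedy construction, the period bound $2^{\lceil\log(d+1)\rceil}\le 2d$ observed right after the construction, and the no-conflict guarantee of Lemma~\ref{lemma:dist_degree_bound}. First I would observe that the Distributed degree-bound algorithm, upon termination, assigns every node $p$ of degree $d$ an integer $x_p\in[0,2^{j_p}-1]$ with $j_p=\lceil\log(d+1)\rceil$, and that $p$ is declared happy at holiday $t$ exactly when $t\equiv x_p\pmod{2^{j_p}}$. Since $\{t:t\equiv x_p\pmod{2^{j_p}}\}$ meets every window of $2^{j_p}$ consecutive integers, $p$ is happy at least once — in fact exactly once — in every block of $2^{j_p}$ holidays, so $mul(p)\le 2^{j_p}=2^{\lceil\log(d+1)\rceil}\le 2d$. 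This gives the claimed upper bound on the maximum unhappiness interval for each node.

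Second, I would verify that the sequence of gatherings this schedule induces is legitimate, i.e.\ that at each holiday the set of happy nodes is an independent set. This is precisely the content of Lemma~\ref{lemma:dist_degree_bound}: for any edge $p_1p_2$, the two nodes never try to host at the same holiday, whether their rounded-log values $j_1,j_2$ coincide (handled by the BEPS coloring within a single phase, with the modulo-restricted palette) or differ (handled by the reduction to Lemma~\ref{lemma:sequential_degree_bound}). Hence for every $t$ the happy set contains no two adjacent nodes, so each holiday's orientation (every happy node a sink) is well-defined and the schedule is a valid gathering sequence. I would also note the schedule is perfectly periodic: node $p$'s happiness pattern has period $2^{j_p}$, which depends only on its degree.

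Putting these together yields the theorem: the Distributed degree-bound algorithm produces a valid periodic gathering sequence in which every node $p$ of degree $d$ is happy within every window of $2^{\lceil\log(d+1)\rceil}\le 2d$ holidays, so $mul(p)\le 2d$.

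I do not expect a genuine obstacle here, since the statement is essentially a corollary of the two preceding lemmas plus the elementary period observation; the only point requiring a sentence of care is that the BEPS-with-restricted-palette procedure actually terminates with a valid coloring — that each node $P_i$ has a nonempty restricted palette — which the text already asserts follows from the analysis of~\cite{BEPS12,Johansson99} being unaffected by the restriction (the restricted palette still has size at least $\deg(p)+1$ because at most $\deg(p)$ residues mod $2^i$ are forbidden and $2^i\ge d+1$ when $i=\lceil\log(d+1)\rceil$). If anything, the mildly delicate part is confirming that this palette-size bookkeeping is consistent across the $O(\log\Delta)$ phases, but that too is covered by the cited analysis, so the proof reduces to invoking Lemma~\ref{lemma:dist_degree_bound} for correctness and the period count for the bound.
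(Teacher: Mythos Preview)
Your proposal is correct and matches the paper's approach: the paper does not give a separate proof of this theorem at all, but simply states it with ``Finally, we have shown the following theorem,'' treating it as an immediate consequence of Lemma~\ref{lemma:dist_degree_bound} (no two adjacent nodes host simultaneously) together with the already-noted period bound $2^{\lceil\log(d+1)\rceil}\le 2d$. Your write-up is just a more explicit version of that same assembly.
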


\section{The Dynamic Setting and Open Problems}
\label{s:dynamic}

So far we have assumed that the conflict graph is fixed and does not change throughout the lifetime of the system. However, as we know, relationships are not fixed and new connections may be created or old connections may dissolve. How do our algorithms fare under such conditions? Clearly two nodes that become adjacent and were scheduled to host the same upcoming holiday need to recolor themselves.

In the algorithm of Section~\ref{s:local} all we needed is a valid coloring. So if two nodes $p$ and $q$ that have the same color become connected, one of them, say $p$, needs to find a new color. But this is relatively simple since $p$'s palette should grow by one more color (since $deg(p)$ was increased by $1$). So given the new color, the new periodic schedule for $p$ is derived from the prefix-free encoding of the new color. This means that after at most  $\phi(d) 2^{\log^*d+1}$ rounds after quiescence $p$ will get to host. Note that in this algorithm, if there are $w$ events of adding new neighbors, then the time a node hosts a holiday may be postponed up to $w \cdot  \phi(d) 2^{\log^*d+1}$ rounds. In the event of deletion of edges, presumably there is nothing to be done. However, if this happens too frequently, then the rate of hosting becomes disproportional to the current degree and we will need to recolor the node (again simple even give the smaller palette).

Our Algorithm  of Section~\ref{s:degree_bound} does not fare so well in a dynamic graphs. It is very important in that algorithm to let the higher degree nodes color themselves before the lower degree ones (since the latter's frequency is much higher and they will occupy available slots if colored before the higher degree).

So a main open problem this work presents is whether it is possible to have a degree bound algorithm that works in a dynamic graph. Another issue is whether it is possible to get to the bound of Section~\ref{s:global} of frequency $d+1$ in a periodic or at least succinctly defined manner.

\paragraph{A lower bound conjecture for periodic scheduling.} In Section~\ref{s:degree_bound} we showed an algorithm that achieves a $2d$ upper bound on the period of a vertex with degree $d$. This is in contrast to the non periodic scheduling obtained in Section~\ref{s:global} where we can obtain a $d+1$ guarantee. We conjecture that there is a separation between what is obtainable when one requires periodicity versus the general case, and we leave as an open problem to prove that if one requires a periodic schedule then the best guarantee obtainable is $d+\omega(1)$.

\section*{Acknowledgements}
We thank Amotz Bar-Noy for helpful comments.

\bibliographystyle{amsplain}
\bibliography{paper}

\appendix
\section*{Appendix}

\section{The Complexity of Happiness and Satisfaction} 
\label{a:deluge}

In this section we consider the problems of maximizing happiness (all children are hosted) or
satisfaction (at least one child is hosted) at a given holiday with no thought of other years.

\subsection{Achieving Maximum Happiness in Hard}\label{ss:max}

Maximizing happiness at a given year means finding the largest set of nodes that can each host all their children.
It is hard to maximize happiness and this follows from the tight relationship with the maximum independent set.

\begin{observation}\label{t:ind}
Maximizing Happiness is $\cal{MAXSNP}$-hard.
\end{observation}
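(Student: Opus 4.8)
The plan is to reduce from Maximum Independent Set, which is known to be $\mathcal{MAXSNP}$-hard (equivalently $\mathcal{APX}$-hard) even on bounded-degree graphs. The key observation from Definition~\ref{d:gathering} is that the set of happy nodes in any single gathering (orientation) is precisely an independent set of $G$: if two adjacent nodes $p,q$ were both happy, the edge between them would have to be directed toward $p$ and toward $q$ simultaneously, a contradiction. Conversely, given any independent set $I$, we can construct an orientation in which exactly the nodes of $I$ are sinks — for each $v \in I$ orient all its incident edges toward $v$ (well-defined since no edge has both endpoints in $I$), and orient the remaining edges arbitrarily subject to not creating a new sink (e.g. process the non-$I$ vertices and make sure each has at least one out-edge; since $G$ has no isolated vertices relevant to the reduction this is easy, and isolated vertices can be deleted without changing the independent-set value). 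Hence ``maximum happiness at a given holiday'' and ``maximum independent set'' are the same optimization problem on $G$, up to this trivial, approximation-preserving correspondence.

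First I would make the correspondence precise as an $L$-reduction (or strict reduction), the notion under which $\mathcal{MAXSNP}$-hardness / $\mathcal{APX}$-hardness is preserved: the instance map is the identity on $G$ (after discarding isolated vertices), the solution map sends an orientation to its set of sinks, and the inverse map sends an independent set to the orientation described above. Both maps are computable in polynomial time, the optimum values coincide ($\mathrm{OPT}_{\text{happy}}(G) = \alpha(G)$), and the value of a feasible solution is exactly preserved in both directions, so the additive-error and the linear-scaling conditions of an $L$-reduction hold with constants $1$. Since MAX-INDEPENDENT-SET is $\mathcal{MAXSNP}$-hard (Papadimitriou–Yannakakis; in fact hard already on graphs of bounded degree, so the reduction stays inside a class where $\mathcal{MAXSNP}$-hardness is the right notion), Maximizing Happiness inherits $\mathcal{MAXSNP}$-hardness.

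The only mildly delicate point — and the step I would be most careful about — is the ``conversely'' direction: verifying that every independent set really is realized as the sink-set of some orientation, and that nodes outside the set can be prevented from accidentally becoming sinks. This is where one must handle degenerate cases (isolated vertices, or a vertex all of whose neighbors lie in $I$, which then has no choice of out-edge and would be forced to be a sink). The clean fix is to observe that such a forced-sink vertex could have been added to $I$ in the first place, contradicting nothing if we only claim the correspondence for \emph{maximal} independent sets — and since MAX-INDEPENDENT-SET restricted to maximal independent sets has the same optimum, this loses nothing. Everything else is bookkeeping, so I do not expect to grind through it here.
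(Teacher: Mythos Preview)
Your proposal is correct and follows essentially the same route as the paper: both identify the set of happy nodes in a gathering with an independent set of $G$, so that maximizing happiness is literally the Maximum Independent Set problem, which is $\mathcal{MAXSNP}$-hard even on bounded-degree graphs. The paper's argument is a one-liner and does not spell out the $L$-reduction; your ``mildly delicate'' worry about realizing a given independent set as \emph{exactly} the sink set is in fact unnecessary for the reduction, since all you need is $\mathrm{OPT}_{\text{happy}}(G)=\alpha(G)$ together with the value-preserving map from orientations to their sink sets.
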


\begin{proof} Consider the conflict graph $G=(P,E)$ as defined in Section~\ref{s:p},
(where a parent is represented by a node and there is an edge between two nodes if the children of the respective parents are
in a relationship).
It is easy to see that maximizing happiness means finding the maximum
independent set which is $\cal{MAXSNP}$-hard
even for degree 3 graphs~\cite{bf:95}.
\end{proof}

\subsection{On the Hardness of Being Fair}
\label{sec:hard_fair}
The observation on the hardness of maximizing happiness also implies that any sort of fairness based on maximum happiness will be hard to compute (and hence hard to achieve).
For instance, consider the coalitional game defined by the conflict graph $G=(V,E)$ where for each subset $S \subseteq V$ of nodes the value $v(S)$ is the size of the maximum independent set (MIS) of the subgraph induced by $S$ (which represents the maximum happiness the parents in $S$ can collectively obtain even if all the other nodes give up). Then clearly it is hard to compute $v(S)$. Moreover,  we claim that solution concepts such as the {\em Shapley Value}\footnote{The Shapley Value of a player is based on the expected {\em marginal contribution} of a player to the value of the game when the players are ordered at random; see more details in Osborne and Rubinstein \cite{OsborneR94} and \cite{Naor2005} for its application in the carpool problem.} of this game are hard to compute: take an arbitrary order of the nodes and consider the total (i.e.\ sum of) marginal contributions of the nodes according to this order. It is always equal to the size of the MIS of the graph, since the number of times $MIS(S)$ can grow as $S$ goes from the empty set to the full set is exactly the size of the MIS on the full set of nodes. Therefore any system that approximates the shares according to this definition can also be used to approximate the size of the MIS on the full graph: take the total happiness over a long enough period of time and it should approximate the MIS size. The inapproximability of the MIS problem to a factor of $n^{1-\varepsilon}$ (see~\cite{Hastad99}) implies a large difference between the average rate of hosting for a  random node (which should be close to the fair share) and the size of the MIS divided by $n$.  Thus, we left without a more sophisticated choice than competing with the `chaotic' ``first-come-first-grab" scheme described in the Introduction.

\subsection{Maximum Satisfaction is Linear-Time computable}\label{ss:sat}
We say that parents are satisfied if at least one of their  children
comes home for the holiday.
\begin{definition}
We say that  $p\in P$ is {\em satisfied in a family holiday
 gathering} (orientation) $h$ if $\exists e\in E_p$ such that $e$ is oriented towards $p$ in $h$.
\end{definition}

In contrast to maximizing happiness, the problem of maximizing satisfaction is computationally easy.

\begin{theorem}\label{t:bpm}
Maximum satisfaction can be achieved in linear time.
\end{theorem}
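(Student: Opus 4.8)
The plan is to reduce maximum satisfaction to a purely structural question about orientations and solve it component by component. Observe that a parent $p$ is satisfied in an orientation $h$ exactly when its in-degree under $h$ is at least $1$, so the problem is: orient the edges of $G$ so as to maximize the number of vertices of positive in-degree. Since orientations of distinct connected components are independent, it suffices to determine, for each connected component $G_i=(V_i,E_i)$, the maximum number of vertices of $G_i$ that can be given positive in-degree, together with a witnessing orientation.

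First I would prove the upper bound. In any orientation the sum of all in-degrees equals $|E_i|$, so at most $|E_i|$ vertices of $G_i$ can have positive in-degree; in particular, if $G_i$ is a tree (including the degenerate case of an isolated vertex), then at most $|V_i|-1$ vertices of $G_i$ can be satisfied, while if $G_i$ contains a cycle the trivial bound $|V_i|$ applies. The second step is to match these bounds constructively using a single DFS of $G_i$ from an arbitrary root. If $G_i$ is a tree, orient every tree edge from parent to child; then every non-root vertex gets in-degree exactly $1$, giving $|V_i|-1$ satisfied vertices, which is optimal. If $G_i$ is not a tree, the DFS exposes a back edge $(u,v)$ with $v$ a proper ancestor of $u$; orient the tree path from $v$ down to $u$ in the parent-to-child direction and orient the back edge from $u$ to $v$, turning this cycle into a directed cycle so that every vertex on it — in particular the apex $v$ — gets in-degree $\ge 1$; orient every remaining tree edge from parent to child, so every vertex off the cycle gets in-degree $\ge 1$ from its own parent edge; orient any remaining non-tree edges arbitrarily. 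Then every vertex of $G_i$ is satisfied, which is optimal.

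Combining the two cases, the maximum number of satisfied parents equals $|P|$ minus the number of acyclic connected components of $G$, and the orientation above attains it. For the running time, computing the connected components, building a DFS tree of each, detecting whether each component admits a back edge, and writing out the orientation are all carried out in one traversal of $G$, i.e.\ in $O(|P|+|E|)$ time.

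The only slightly delicate point — the step I would be most careful about — is checking that the non-tree construction gives every vertex positive in-degree simultaneously: the vertices strictly inside the chosen cycle, the apex vertex $v$ (whose in-edge comes from the back edge, not a tree edge), and each vertex hanging off the cycle (whose in-edge is its own parent edge) must all be covered with no conflict. This goes through because the cycle's tree-path segment is already oriented parent-to-child, so imposing the cyclic orientation is consistent with the blanket rule ``orient every tree edge parent-to-child,'' and the back edge supplies exactly the one in-edge the apex would otherwise lack.
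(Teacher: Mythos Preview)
Your reduction to ``orient $G$ so as to maximize the number of vertices with positive in-degree'' is clean, and the component-by-component analysis with the bound $\min(|V_i|,|E_i|)$ is correct. However, your construction for the non-tree case has a genuine bug: you DFS from an \emph{arbitrary} root $r$ and then orient every tree edge parent-to-child, claiming ``every vertex off the cycle gets in-degree $\geq 1$ from its own parent edge.'' But the root $r$ has no parent edge, and nothing in your construction guarantees that $r$ lies on the cycle you found. Concretely, take $V=\{r,a,b,c\}$ with edges $\{r,a\},\{a,b\},\{b,c\},\{a,c\}$; DFS from $r$ gives the tree $r\to a\to b\to c$ and back edge $c\to a$, so your cycle is $a\to b\to c\to a$ and the remaining tree edge is oriented $r\to a$. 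Then $r$ has in-degree $0$, yet this component does admit an orientation with all four vertices satisfied (e.g.\ $a\to r$, $c\to a$, $a\to b$, $b\to c$). Your final ``delicate point'' paragraph checks the apex $v$ and the vertices below the cycle, but never checks vertices \emph{above} the apex on the root-to-$v$ path.

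The fix is easy and keeps everything linear: once you find the back edge $(u,v)$, re-root the spanning tree at $v$ (equivalently, reverse the orientation of the tree path from $r$ down to $v$) and then orient all tree edges parent-to-child in that re-rooted tree; the back edge $u\to v$ supplies $v$'s in-edge, and every other vertex now genuinely has a parent edge. With that correction your argument goes through and yields the nice closed form ``maximum satisfaction $= |P|$ minus the number of acyclic components.''

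For comparison, the paper takes a different route: it recasts the problem as maximum matching in the bipartite graph with parents on one side and children on the other (each child having degree exactly $2$), and then exploits the degree-$2$ structure with a greedy peel-off of single-child parents. Your orientation-based argument is more direct and yields the explicit structural characterization above, which the paper's matching formulation does not state; the paper's version, on the other hand, makes the connection to matching explicit.
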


\begin{proof} Maximizing satisfaction means finding the maximum
bipartite matching of the bipartite graph $G'=(P+C, E')$ where the sets of nodes are the parents $P$ and the children $C$ and a parent is connected to its children. This can be done in time
$O(\sqrt{n}|E'|)$ by the Hopcroft-Karp algorithm~\cite{hk:73}. 

However, a general algorithm for maximum matching in bipartite graphs is an overkill for this problem given that every node in $C$ has exactly two edges: it is possible to find a maximum matching by starting from the parents that have just a single child; each such parent is matched with its child (if two such parents are in-laws, then the satisfied one is decided arbitrarily). Parents who became single child parents since their other children have been matched to their in-laws are treated similarly.
This continues until there are no single child parents. Then all the remaining parents can be satisfied: pick an arbitrary child and match it to a parent. At any point there can be at most one single child parent. Match it to their child.
 \end{proof}

Note that this solution cannot  be found in a distributed manner quickly, given that maximum matching requires $\Omega(n)$ distance communication in some graphs, such as the cycle of length $n$.

While achieving maximum satisfaction is fast, the solution is not
socially acceptable, since the same parents will be satisfied every
year while others will never be satisfied.
Note that if we want a scheme whereby all parents are guaranteed satisfaction within some cycle of time, then we can guarantee that a parent will not be unsatisfied for more than a year: each child simply alternates and goes one year to its parent and one year to its in-law.

\section{Details for the Elias Omega Code}
\begin{definition}\label{d:elias}
Let $n\in \mathbb{N}$. Denote by $B(n)$ the binary representation of
$n$, i.e.\ $B(n)$ is a string over $\{ 0,1\}$ which is the
representation of $n$ in base $2$ with no leading zeros.
Denote the number of bits in $B(n)$ (the highest power of $2$, $b$
such that $2^b \leq n$) by $|B(n)|$. Let $S$ be a binary
number. Denote by  $LSB(S,k)$ the suffix of $S$ of length $k$, or the
$k$ least significant bits of $S$.

Let $i$ be a positive integer. Denote the empty string by
$\lambda$. Given two strings $u$ and $v$, denote by $u\circ v$ the concatenation of $u$ and $v$. Recursively define a binary string $re(i)$ as follows:
$$re(i)=
\begin{cases}
\lambda & {\rm if\ } $i=1$,
\cr
re(|B(i)|-1)\circ B(i) & {\rm if\ } i>1.
\end{cases}
$$

The {\em Elias omega encoding} of
$i$, is $re(i) \circ 0$ and is denoted by $\omega(i)$.
\end{definition}

{\bf Example:} Consider the Elias omega code of the following numbers:
\begin{enumerate}
\item $i=1$: $re(1)=\lambda$.
{\sl Elias omega code}: $0$.
\item $i=9$:
  $B(9)=1001$. $|B(9)|-1=3$. $B(3)=11$. $|B(3)|-1=1$. Therefore
  $re(9)= re(1) B(3) B(9) = \lambda\ 11\ 1001$\\
{\sl Elias omega code}: $11\ 1001\ 0$.
\item The Elias omega codes of the numbers $1$ to $15$ are:
$0,\ 10\ 0,\ 11\ 0,\ 10\ 100\ 0,\ 10\ 101\ 0,$\\ $10\ 110\ 0,\
10\ 111\ 0,\
  11\ 1000\ 0,\ 11\ 1001\ 0,\ 11\ 1010\ 0,\ 11\ 1011\ 0,\ 11\ 1100\ 0,\
  11\ 1101\ 0$\\ $11\ 1110\ 0,\ 11\ 1111\ 0.$

\end{enumerate}
\begin{properties}\label{p:eomega}
\mbox{}\\
\begin{enumerate} 
\item The Elias omega code is a prefix-free code.
\item The Elias omega code uses $\rho(i) = 1+rb(i)$ bits to represent
  the number $i$, where $rb(i)$ is recursively defined as follows:
$$rb(i)=
\begin{cases}
0 & {\rm if\ } $i=1$,
\cr
\lceil \log(i) \rceil + rb(\lceil \log(i) \rceil -1) & {\rm if\ } i>1.
\end{cases}
.$$
Explicitly, $\rho(i)= 1+ \lceil \log(i) \rceil + \lceil
  \log(\lceil \log(i) \rceil -1) \rceil + \lceil \log(\lceil
  \log(\lceil \log(i) \rceil -1) \rceil - 1) \rceil + \cdots$.
\end{enumerate}
\end{properties}

\end{document}